\documentclass[12pt]{article}

\usepackage{amsmath, amsthm, amsfonts, amssymb, graphicx, tikz}
\usetikzlibrary{arrows}
\newtheorem{theorem}{Theorem}
\newtheorem{lemma}[theorem]{Lemma}

\newtheorem{corollary}[theorem]{Corollary}

\title{Length $3$ Edge-Disjoint Paths \\and Partial Orientation}

\author{Hannah Alpert \\
\small Department of Mathematics \\[-0.8ex]
\small Massachusetts Institute of Technology \\[-0.8ex]
\small Building 2, Room 229 \\[-0.8ex]
\small 77 Massachusetts Avenue \\[-0.8ex]
\small Cambridge, MA 02139 \\[-0.8ex]
\small \texttt{hcalpert@math.mit.edu} \and
Jennifer Iglesias \\
\small Department of Mathematics \\[-0.8ex]
\small Harvey Mudd College \\[-0.8ex]
\small 340 E. Foothill Blvd. \\[-0.8ex]
\small Claremont, CA 91711 \\[-0.8ex]
\small \texttt{jiglesias@hmc.edu}}

\begin{document}

\maketitle

\small Mathematics Subject Classification: 05C38, 05C40, 68Q25

\small Keywords: edge-disjoint paths, NP-hardness, NP-completeness, network flow, directed graph, oriented graph, degree sequence

\begin{abstract}
In 2003, it was claimed that the following problem was solvable in polynomial time: do there exist $k$ edge-disjoint paths of length exactly $3$ between vertices $s$ and $t$ in a given graph?  The proof was flawed, and we show that this problem is NP-hard even if we disallow multiple edges.  We use a reduction from \textsc{Partial Orientation}, a problem recently shown by P\'alv\"olgyi to be NP-hard.
\end{abstract}

In \cite{bley03}, Bley discussed the problem \textsc{Max Edge-Disjoint Exact-$\ell$-Length Paths}, abbreviated $\mathsf{MEDEP}(\ell)$: given an undirected multigraph, and two vertices $s$ and $t$, do there exist $k$ edge-disjoint paths between $s$ and $t$ of length exactly $\ell$?  He used a reduction to network flow to claim that $\mathsf{MEDEP}(3)$ was solvable in polynomial time, but the reduction was flawed, as we describe in~\cite{alpert12pub}, where we also showed that $\mathsf{MEDEP}(3)$ is NP-hard.  For completeness we duplicate the proof below, as Theorem~\ref{medep}.

In this note, we show that the problem $\mathsf{MEDEP}(3)$ remains NP-hard even if we require the input graph to be simple.  We call this restricted problem \textsc{Simple Max Edge-Disjoint Exact-3-Length Paths}, or $\mathsf{SMEDEP}(3)$.  To show that $\mathsf{MEDEP}(3)$ and $\mathsf{SMEDEP}(3)$ are NP-hard, we use a polynomial reduction from the problem \textsc{Partial Orientation}, shown to be NP-hard by P\'alv\"olgyi in \cite{palvolgyi09}: given a graph, can we replace some of the edges by directed edges, such that each vertex has prescribed in-, out-, and undirected-degree?

\begin{theorem}\label{medep}
The problem $\mathsf{MEDEP}(3)$ is NP-hard.
\end{theorem}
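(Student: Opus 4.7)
The plan is to polynomially reduce \textsc{Partial Orientation} to $\mathsf{MEDEP}(3)$. Given an instance consisting of a graph $G$ with prescribed in-degree $i_v$, out-degree $o_v$, and undirected-degree $u_v$ at each vertex $v$, I would construct a multigraph $G'$ as follows. Take two new vertices $s$ and $t$, and a copy $v'$ of each vertex $v$ of $G$. For each $v$, place $o_v$ parallel edges between $s$ and $v'$ and $i_v$ parallel edges between $v'$ and $t$. For each edge $\{u,v\}$ of $G$, place a single edge between $u'$ and $v'$. Ask for $k=\sum_v o_v$ edge-disjoint $s$--$t$ paths of length exactly $3$.

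The correctness argument is driven by the observation that any length-$3$ path from $s$ to $t$ in $G'$ has the form $s-u'-v'-t$ for some edge $\{u,v\}$ of $G$, so choosing such a path amounts to orienting $\{u,v\}$ as $u\to v$. For the forward direction, given a valid partial orientation I would, for each oriented edge $u\to v$, pick one unused $s$--$u'$ edge and one unused $v'$--$t$ edge to form a path through the middle edge $\{u,v\}$; since exactly $o_u$ edges leave $u$ and exactly $i_v$ edges enter $v$, the parallel edges at $s$ and $t$ are consumed just to capacity. Conversely, given $k$ edge-disjoint length-$3$ paths, orient each middle edge of $G$ in the direction it is traversed, and leave the remaining edges undirected. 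Edge-disjointness guarantees that each middle edge receives at most one orientation. At each vertex $v$, the outgoing count cannot exceed the capacity $o_v$ of parallel $s$--$v'$ edges; these inequalities sum to $k=\sum_v o_v$, so each must be tight. The symmetric argument works for in-degrees, and the undirected degree $\deg_G(v)-o_v-i_v=u_v$ then comes out correctly.

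The construction is clearly polynomial because the number of parallel edges incident to $s$ (respectively $t$) is bounded by $\sum_v o_v \le |E(G)|$. The main subtlety — really the only non-formulaic step — is the double-counting argument in the reverse direction: verifying that ``each vertex uses at most $o_v$ outgoing slots'' together with ``total slots used equals $\sum_v o_v$'' forces every vertex to saturate its out-degree budget, which is what turns a raw packing of paths into a valid orientation. Everything else is bookkeeping about the structure of length-$3$ paths.
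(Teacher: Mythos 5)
Your reduction is exactly the one in the paper: attach $s$ to each $v$ with multiplicity equal to its prescribed out-degree, attach $t$ with multiplicity equal to its prescribed in-degree, and set $k$ to the common sum, identifying each oriented edge $u\to v$ with a path $suvt$. Your write-up just fills in the saturation/double-counting details that the paper leaves implicit, so it is correct and essentially the same proof.
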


\begin{proof}
Let $G$ be the given graph in an instance of \textsc{Partial Orientation}. We construct a graph $G'$, the input to $\mathsf{MEDEP}(3)$, as follows.  To $G$ we add two new vertices $s$ and $t$.  Let $s$ be adjacent to each vertex $v$ of $G$ with multiplicity equal to the prescribed out-degree of $v$, and let $t$ be adjacent to $v$ with multiplicity equal to the prescribed in-degree of $v$.  Then the sum of the prescribed out-degrees is the degree of $s$, and the sum of the prescribed in-degrees is the degree of $t$.  If these sums are not the same, then trivially the instance of \textsc{Partial Orientation} has no solution, so we assume the degrees of $s$ and $t$ are equal.  As input to $\mathsf{MEDEP}(3)$, we set $k$ equal to the degree of $s$ and $t$.

Now, any solution of \textsc{Partial Orientation} on $G$ corresponds to a solution of $\mathsf{MEDEP}(3)$ on $G'$, and vice versa.  We simply correspond each directed edge $(u, v)$ in \textsc{Partial Orientation} with a path $suvt$ in $\mathsf{MEDEP}(3)$.  This is a polynomial reduction from \textsc{Partial Orientation} to $\mathsf{MEDEP}(3)$, and thus shows that $\mathsf{MEDEP}(3)$ is NP-hard.
\end{proof}

Note that this proof requires allowing duplicate edges incident to $s$ and to $t$.  Below we show that the problem $\mathsf{SMEDEP}(3)$, which disallows duplicate edges, is still NP-hard.

\begin{theorem}\label{deg1}
The problem \textrm{\textsc{Partial Orientation}} is still NP-hard even if the prescribed in-degree and prescribed out-degree of each vertex must be at most $1$.
\end{theorem}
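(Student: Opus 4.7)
My plan is to give a polynomial reduction from the general \textsc{Partial Orientation} problem (shown NP-hard in~\cite{palvolgyi09}) to the restricted version, by replacing each vertex $v$ of the input graph $G$ with a local gadget $H_v$ whose internal vertices all have prescribed in-degree and out-degree at most~$1$. The gadget contains $\deg_G(v)$ \emph{port} vertices, each inheriting one incident edge of $v$ as an external edge, together with auxiliary \emph{slot} vertices joined to the ports by a complete bipartite subgraph. I design $H_v$ so that valid partial orientations of the new graph $G'$ correspond bijectively to valid partial orientations of $G$.

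For $v$ with prescribed triple $(a_v, b_v, c_v)$ and total degree $d_v = a_v + b_v + c_v$, I would include in $H_v$ exactly $a_v + c_v$ \emph{in-slots} with prescribed degrees $(1, 0, \cdot)$ and exactly $b_v + c_v$ \emph{out-slots} with prescribed $(0, 1, \cdot)$, while giving each of the $d_v$ ports prescribed $(1, 1, \cdot)$, where every undirected-degree is whatever value completes the total. A case analysis at a single port $p$ shows that $p$'s required in-edge must be either its external (if oriented toward $p$) or an edge from some out-slot, and its required out-edge must be either its external (if oriented away) or an edge to some in-slot; in particular an undirected external forces $p$ to source both its required in-edge and out-edge from slots. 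Summing over all ports and equating with the total slot in-/out-edges gives $\#_\mathrm{in} + \#_\mathrm{undir} = a_v + c_v$ and $\#_\mathrm{out} + \#_\mathrm{undir} = b_v + c_v$, which combined with $\#_\mathrm{in} + \#_\mathrm{out} + \#_\mathrm{undir} = d_v$ force $(\#_\mathrm{in}, \#_\mathrm{out}, \#_\mathrm{undir}) = (a_v, b_v, c_v)$. Conversely, any such external distribution is realizable by picking any bijection from the in-slots to the $a_v + c_v$ ports whose externals are ``in'' or ``undirected'' (and similarly for out-slots), orienting the matched edges and leaving all other gadget edges undirected.

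The main obstacle is getting the slot counts right: the naive choice of $a_v$ in-slots and $b_v$ out-slots fails because a port's required out-edge can go either to an in-slot or through its external as ``out'', so in-oriented externals are conflated with undirected ones. Adding $c_v$ extra slots of each type is the fix, and once the arithmetic is correct the rest is routine bookkeeping; each $H_v$ has $O(d_v^2)$ vertices and edges, so the reduction is polynomial and NP-hardness of the restricted problem follows.
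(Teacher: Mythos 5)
Your argument is correct, but it takes a genuinely different route from the paper's. The paper does not reduce from \textsc{Partial Orientation} at all: it reopens P\'alv\"olgyi's reduction from \textsc{3-SAT}, observes that the variable (tree) gadgets already have all prescribed in- and out-degrees at most $1$, and that the only offender is the clause gadget, so it substitutes a new seven-vertex clause subgraph (with degree types such as $\rho\delta\theta\theta\theta$ and $\delta\theta\theta\theta$) that is orientable exactly when at least one of its three literal edges points into it, then finishes with P\'alv\"olgyi's mirroring step. You instead treat the unrestricted \textsc{Partial Orientation} problem as a black box and split each vertex $v$ into a port--slot gadget; your accounting is sound: since in-slots have out-degree $0$ and out-slots have in-degree $0$, each port's unique out-edge is either its external edge or an edge to an in-slot, the $a_v+c_v$ in-slots each absorb exactly one port out-edge (and dually for out-slots), and the resulting linear system has $(a_v,b_v,c_v)$ as its unique solution, with the converse realized by perfect matchings in the complete bipartite port--slot graph. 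Your approach buys modularity --- it needs no knowledge of how \textsc{Partial Orientation} was proved hard and would transfer to any future hardness result for the unrestricted problem --- at the cost of a quadratic (still polynomial) blowup in each vertex's neighborhood; the paper's approach keeps the instance linear-sized and self-contained from \textsc{3-SAT}, but requires verifying a bespoke gadget's full orientation behavior. In a final write-up you should record the two boundary checks that make your prescription well defined: the undirected-degrees you assign (total degree minus $2$ at a port, minus $1$ at a slot) are nonnegative whenever $d_v\ge 1$, and a vertex with $d_v=0$ yields an empty gadget (and a trivially infeasible instance unless $a_v=b_v=c_v=0$).
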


The reduction presented in the proof of Theorem~\ref{medep} immediately implies the following corollary:

\begin{corollary}\label{smedep}
The problem $\mathsf{SMEDEP}(3)$ is NP-hard.
\end{corollary}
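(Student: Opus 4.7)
The plan is to verify that the reduction constructed in the proof of Theorem~\ref{medep}, when applied to instances of \textsc{Partial Orientation} with the degree restriction of Theorem~\ref{deg1}, always yields a simple graph $G'$. Assuming this, the reduction becomes a polynomial reduction from an NP-hard problem (degree-at-most-1 \textsc{Partial Orientation}) to $\mathsf{SMEDEP}(3)$, giving the desired hardness.

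First I would recall the construction: $G' = G \cup \{s, t\}$ with $s$ joined to each $v$ with multiplicity equal to the prescribed out-degree $d^+(v)$, and $t$ joined to $v$ with multiplicity equal to the prescribed in-degree $d^-(v)$. The only sources of multi-edges in $G'$ are (i) multi-edges inside $G$, (ii) repeated edges from $s$ to some $v$, and (iii) repeated edges from $t$ to some $v$; no edge is created between $s$ and $t$, and no loops are introduced. We may assume $G$ is simple (\textsc{Partial Orientation} is NP-hard on simple graphs, as in P\'alv\"olgyi's construction, and Theorem~\ref{deg1} inherits this). Under the hypothesis $d^+(v), d^-(v) \le 1$ supplied by Theorem~\ref{deg1}, case (ii) and case (iii) cannot occur, so $G'$ is simple.

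Then I would note that the path-by-path correspondence $suvt \leftrightarrow (u,v)$ from the proof of Theorem~\ref{medep} works verbatim in this restricted setting; nothing in that bijection used the possibility of parallel edges, it merely \emph{required} them in the original proof because unrestricted in/out degrees forced multi-edges at $s$ and $t$. Composing this reduction with Theorem~\ref{deg1} yields a polynomial-time reduction from an NP-hard problem to $\mathsf{SMEDEP}(3)$.

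The only real obstacle is the simplicity check above; once one recognises that the degree-1 restriction from Theorem~\ref{deg1} is exactly the condition needed to kill the parallel edges introduced at $s$ and $t$, the corollary is immediate. I expect the whole argument to be essentially a one-paragraph observation rather than a separate combinatorial construction.
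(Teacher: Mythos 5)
Your proposal is correct and matches the paper's argument: the paper likewise observes that the reduction of Theorem~\ref{medep}, applied to instances satisfying the degree bound of Theorem~\ref{deg1}, produces only simple edges at $s$ and $t$, so the corollary follows immediately. The simplicity check you carry out (including noting that $G$ itself is simple) is exactly the content the paper leaves implicit in its one-line derivation.
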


To prove Theorem~\ref{deg1}, we modify P\'alv\"olgyi's reduction from \textsc{3-SAT} to avoid using vertices of prescribed in- or out-degree greater than $1$.  For each vertex $v$ in the graph $G$, he denotes the prescribed in-degree by $\rho(v)$, the prescribed out-degree by $\delta(v)$, and the remaining prescribed undirected-degree by $\theta(v)$.  He labels each vertex with a string of letters $\rho$, $\delta$, and $\theta$, each with multiplicity equal to the value of that function at that vertex.

We will construct $G$ in parts, and within each part, we will assign the prescribed degrees to each vertex before having constructed all the edges incident to that vertex.  Thus for convenience, by a ``subgraph'' of $G$ we mean the induced subgraph on a collection of vertices, together with the prescribed in-, out-, and undirected-degrees in $G$ of those vertices, and any edges incident to those vertices with unspecified other endpoints in $G$.  We temporarily imagine each of these edges to have only one endpoint, and call them ``unfinished edges'' with respect to the subgraph.  An ``orientation'' of a graph or subgraph chooses each edge (including any unfinished edges) to be oriented in one direction or the other, or to be undirected, in a way that satisfies the prescribed in-, out-, and undirected-degrees of the vertices.

\begin{lemma}[P\'alv\"olgyi~\cite{palvolgyi09}]
For any $m \in \mathbb{N}$, there exists a tree subgraph with the following properties:
\begin{itemize}
\item  Every vertex has degree type $\rho\delta$, $\rho\theta$, $\delta\theta$, or $\rho\delta\theta$.
\item The subgraph has only two possible orientations, called ``true'' and ``false''.
\item Among the unfinished edges, there are $m$ edges $t_1, \ldots, t_m$ that point (down) away from the tree in the ``true'' orientation and (up) toward the tree in the ``false'' orientation, and $m$ unfinished edges $f_1, \ldots, f_m$ that point (up) toward the tree in the ``true'' orientation and (down) away from the tree in the ``false'' orientation.  
\end{itemize}
The size of this subgraph is linear in $m$.
\end{lemma}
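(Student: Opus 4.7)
I would prove the lemma by induction on $m$. For the base case $m = 1$, I take a single vertex $v$ of degree type $\rho\delta$ with its two incident unfinished edges designated as $t_1$ and $f_1$. Since $v$'s prescribed in- and out-degrees are both $1$, exactly one of its two edges is directed into $v$ and the other out, giving precisely two orientations: the ``true'' one with $t_1$ out and $f_1$ in, and the ``false'' one with $t_1$ in and $f_1$ out. All three bullet points of the lemma are immediate.

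For the inductive step, I would attach to $T_m$ a small constant-size \emph{coupler} $C_{m+1}$ introducing the new pair $t_{m+1}, f_{m+1}$. The coupler consists of two $\rho\delta$ terminal vertices $a, b$ (carrying $t_{m+1}$ and $f_{m+1}$) both adjacent to a central $\rho\delta\theta$ vertex $c$, which is in turn joined to a designated vertex of $T_m$ through a short transmission chain built from the permitted types. Since $a$ and $b$ have type $\rho\delta$, the edges $\{c, a\}$ and $\{c, b\}$ are forced to be directed, so $c$'s single undirected slot is occupied by its chain-side edge. The chain uses alternating $\rho\theta$ and $\delta\theta$ vertices (with an occasional $\rho\delta\theta$ vertex where more flexibility is needed) so that its internal directed edges all take a single direction determined by the true/false label of $T_m$, and this direction in turn fixes whether $\{c,a\}$ is in or out at $c$, which selects between the ``true'' and ``false'' configurations at the new pair.

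The main obstacle is arguing that the coupler does not introduce a second degree of freedom: a careless construction would allow the new pair to flip independently of $T_m$, yielding four orientations instead of two. To prevent this, each auxiliary vertex along the chain must have its slot assignment forced by its neighbors, leaving only a single global parameter, namely the direction of the chain. A case analysis at each vertex of $C_{m+1}$ then shows that any valid orientation of $T_{m+1}$ restricts to a valid orientation of $T_m$ and induces the matching configuration of $t_{m+1}, f_{m+1}$, so that the four possible combinations of $T_m$-state and new-pair-state collapse to the two that extend $T_m$'s true and false orientations in matching fashion. Since $C_{m+1}$ has $O(1)$ vertices and edges, iterating this construction produces $T_m$ with size linear in $m$, completing the induction.
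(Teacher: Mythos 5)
The paper does not actually prove this lemma---it is imported from P\'alv\"olgyi's paper and used as a black box---so there is no in-paper argument to measure yours against; judged on its own terms, your construction has a genuine gap in the inductive step. The base case is fine: a single $\rho\delta$ vertex with two unfinished edges has exactly the two required orientations. The problem is the coupler. You attach the two new $\rho\delta$ terminals $a$ and $b$ to a central $\rho\delta\theta$ vertex $c$, and you correctly observe that, since $a$ and $b$ admit no undirected edges, both $\{c,a\}$ and $\{c,b\}$ must be directed, which forces the chain-side edge to occupy $c$'s unique undirected slot. But an undirected edge is undirected at \emph{both} endpoints and therefore transmits no orientation information: whatever happens along the transmission chain and inside $T_m$, the vertex $c$ remains free to choose either $a \to c \to b$ or $b \to c \to a$, and both choices satisfy every prescribed degree. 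Hence the state of the new pair $(t_{m+1}, f_{m+1})$ flips independently of the state of $T_m$, and $T_{m+1}$ has at least four orientations, violating the second bullet of the lemma. This is precisely the ``second degree of freedom'' you name as the main obstacle, but your proposed remedy is self-contradictory: you cannot simultaneously insist that the chain edge fills $c$'s $\theta$ slot and that the chain's direction ``fixes whether $\{c,a\}$ is in or out at $c$.''

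There are secondary problems of the same flavor. An internal $\rho\theta$ or $\delta\theta$ vertex has exactly one directed and one undirected incident edge, so along a chain of such vertices the edges alternate directed/undirected; once the edge at the $c$-end is forced to be undirected, the entire alternation, and the direction of every directed edge in it (always from the $\delta\theta$ endpoint to the $\rho\theta$ endpoint), is determined locally. The chain therefore has a unique orientation and is not a wire at all. Moreover, by your own inductive hypothesis every vertex of $T_m$ already has its full prescribed degree realized, so there is no ``designated vertex'' with a spare slot to receive the chain without modifying $T_m$ and reopening the two-orientation claim. A correct construction must route the information-carrying \emph{directed} edges through vertices whose $\rho$ and $\delta$ slots are consumed by tree edges (so that the forced in/out assignments propagate along a spine) and hang the $t_i$ and $f_i$ off that spine; parking the new pair behind an undirected edge cannot couple it to anything.
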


P\'alv\"olgyi uses one tree subgraph for each variable $x_i$ of the given instance of \textsc{3-SAT}.  He chooses $m$ larger than both the number of times $x_i$ appears in any clause and the number of times $\overline{x_i}$ appears in any clause.  Thus he can reserve one unfinished edge $t_j$ for each time $x_i$ appears in a clause and one unfinished edge $f_k$ for each time $\overline{x_i}$ appears in a clause.  Each clause is represented by a subgraph containing the other end of each of three edges $t_j$ or $f_k$, corresponding to the three variables $x_i$ or $\overline{x_i}$ that appear in that clause.  P\'alv\"olgyi's subgraph for the clause is the only place in the construction where the prescribed in- and out-degrees are ever greater than $1$, so we propose a substitute subgraph, as follows.

\begin{lemma}
There exists a subgraph including three unfinished edges $e_1$, $e_2$, and $e_3$ such that the assignment of at least one of these three to point (down) toward the subgraph and the others (up) away from the subgraph can be extended to an orientation of the whole subgraph, but the assignment of all three to point (up) away from the subgraph cannot.  Furthermore, the subgraph can be constructed so that $\rho(v) \leq 1$ and $\delta(v) \leq 1$ for every vertex $v$ of the subgraph.
\end{lemma}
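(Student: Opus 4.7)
The plan is to construct the subgraph around a central vertex $c$ of degree type $\rho\theta\theta$, so that $\rho(c) = 1$, $\delta(c) = 0$, and $\theta(c) = 2$. I connect $c$ to three \emph{input vertices} $v_1, v_2, v_3$, each of type $\rho\delta\theta$, via single edges $v_1c$, $v_2c$, $v_3c$, and attach $e_i$ to $v_i$. Each $v_i$ will have one further edge $v_iu_i$ to a small \emph{absorber} subgraph $U_i$, described below.

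Because $\delta(c) = 0$, no edge is oriented out of $c$, and $\rho(c) = 1$ then forces exactly one of the edges $v_ic$ to be directed into $c$, with the other two undirected. Call the $v_i$ whose edge $v_ic$ is directed into $c$ the \emph{active} input. A short case analysis at each $v_i$, using $\rho(v_i) = \delta(v_i) = \theta(v_i) = 1$, gives: for $v_i$ active, $e_i$ must point down (supplying $\rho(v_i)$) and $v_iu_i$ must be undirected; for $v_i$ inactive with $e_i$ down, $v_iu_i$ must be oriented out of $v_i$; and for $v_i$ inactive with $e_i$ up, $v_iu_i$ must be oriented into $v_i$. In particular, the all-three-up assignment leaves no candidate for the active input and so cannot be extended, while any assignment with at least one $e_i$ pointing down extends by choosing some such $v_i$ as the active input.

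The main obstacle is to ensure that each absorber $U_i$ can accept $v_iu_i$ in all three of the local states listed above. A short counting argument rules out letting $U_i$ have $v_iu_i$ as its only unfinished edge: the sum of $\rho(v) - \delta(v)$ over the vertices of $U_i$ is a fixed constant, whereas the contribution from the single edge $v_iu_i$ takes three different values ($+1$, $-1$, $0$) across the three orientations. I would resolve this by giving $U_i$ additional unfinished edges whose orientations absorb the mismatch; for instance, taking $u_i$ itself of type $\rho\delta\theta$ and introducing two additional unfinished edges incident to $u_i$ provides exactly the degrees of freedom needed, and these extra unfinished edges can be paired with analogous ones elsewhere in $G$ without affecting the internal analysis of the clause gadget.

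With this in place, the remaining work is routine: check case by case that every assignment in which at least one $e_i$ points down extends to an orientation of the whole subgraph, confirm that the all-up assignment does not (because $\rho(c) = 1$ cannot be satisfied), and verify that every vertex of the construction indeed has $\rho \le 1$ and $\delta \le 1$.
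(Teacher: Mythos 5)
Your clause gadget's core mechanism is sound: with $\rho(c)=1$ and $\delta(c)=0$, the vertex $c$ forces exactly one edge $v_ic$ to be directed into $c$, and your case analysis at the $v_i$ correctly shows this is possible only when some $e_i$ points down, while the all-up assignment fails. The gap is the absorbers, which is to say the construction is not actually finished. You correctly identify the counting obstruction showing that a single unfinished edge $v_iu_i$ cannot take three different states (undirected, into $U_i$, out of $U_i$) against a fixed degree prescription, but your proposed fix --- giving $u_i$ two further unfinished edges to be ``paired with analogous ones elsewhere in $G$ without affecting the internal analysis'' --- is precisely the part that requires proof and does not go through as stated. The required joint state of those two extra edges varies with which case the clause gadget is in, hence with the satisfying assignment, so their other endpoints must themselves be flexible in all the needed combinations; you have pushed the original three-state difficulty one vertex further out rather than resolved it. Moreover, the completion step of the overall construction (the mirrored reflection) resolves unfinished edges by adding a \emph{single} edge from each vertex incident to an unfinished edge to its mirror image, so a vertex $u_i$ carrying two unfinished edges is not handled by that step.

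The paper avoids the three-state problem at the source rather than trying to absorb it downstream: each input vertex is given type $\rho\delta\theta\theta\theta$ and placed on a triangle together with a hub vertex of type $\delta\theta\theta\theta$, so that the edge from each input vertex down to its pendant neighbor (of type $\rho\theta$, carrying one unfinished edge) only ever needs to be directed into the pendant or left undirected --- two states, which a $\rho\theta$ vertex with a single unfinished edge accommodates. Every vertex of that gadget has at most one unfinished edge, and the orientations displayed in Figure~\ref{3-our-gadget} certify the lemma directly. To salvage your construction you would need either to exhibit explicit absorbers $U_i$ in which every vertex has at most one unfinished edge and to verify all three required states, or to enrich the $v_i$ with additional $\theta$ capacity (as the paper does) so that $v_iu_i$ never needs to take all three states.
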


\begin{proof}
We exhibit this subgraph in Figure~\ref{our-gadget}, with $e_1$, $e_2$, and $e_3$ extending up above the subgraph.
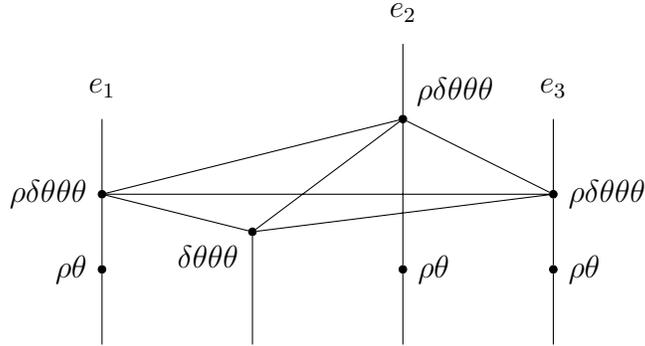
\begin{figure}[h!]
\begin{center}
\begin{tikzpicture}[ vert/.style={circle, draw=black, fill=black, inner sep = 0pt, minimum size = 1mm}]
\node(e1) at (0, 1) [label = above: $e_1$]{};
\node(e2) at (4, 2) [label = above: $e_2$]{};
\node(e3) at (6, 1) [label = above: $e_3$]{};
\node(va) at (0, 0) [vert] [label = left: $\rho\delta\theta\theta\theta$]{};
\node(vb) at (4, 1) [vert] [label = above right: $\rho\delta\theta\theta\theta$]{};
\node(vc) at (6, 0) [vert] [label = right: $\rho\delta\theta\theta\theta$]{};
\node(vaa) at (0, -1) [vert] [label = left: $\rho\theta$]{};
\node(vbb) at (4, -1) [vert] [label = right: $\rho\theta$]{};
\node(vcc) at (6, -1) [vert] [label = right: $\rho\theta$]{};
\node(vx) at (2, -.5) [vert] [label = below left: $\delta\theta\theta\theta$]{};

\draw (va)--(vb)--(vc)--(va)
(vx)--(va) (vx)--(vb) (vx)--(vc)
(0, 1)--(va)--(vaa)--(0, -2) (4, 2)--(vb)--(vbb)--(4,-2) (6, 1)--(vc)--(vcc)--(6, -2)
(vx)--(2, -2);

\end{tikzpicture}
\end{center}
\caption{Our subgraph for representing a clause.}\label{our-gadget}
\end{figure}

Figure~\ref{3-our-gadget} shows the orientations in which at least one of $e_1$, $e_2$, and $e_3$ points down and the others point up.
 
\begin{figure}[h!]
\begin{center}
\begin{tikzpicture}[ scale=1, >=triangle 45, vert/.style={circle, draw=black, fill=black, inner sep = 0pt, minimum size = 1mm}]
\node(va) at (0, 0) [vert] {};
\node(vb) at (2, 1) [vert] {};
\node(vc) at (3, 0) [vert] {};
\node(vaa) at (0, -1) [vert] {};
\node(vbb) at (2, -1) [vert] {};
\node(vcc) at (3, -1) [vert] {};
\node(vx) at (1, -.5) [vert] {};

\draw[dotted] (va)--(vb);
\draw[dotted] (vb)--(vc);
\draw[dotted] (vc)--(va);

\draw[dotted] (vx)--(va);
\draw[dotted] (vx)--(vb);
\draw[dotted] (vx)--(vc);

\draw[->] (0, 1)--(va);
\draw[->] (va)--(vaa);
\draw[dotted] (vaa)--(0, -2);

\draw[->] (2, 2)--(vb);
\draw[->] (vb)--(vbb);
\draw[dotted] (vbb)--(2,-2);

\draw[->] (3, 1)--(vc);
\draw[->] (vc)--(vcc);
\draw[dotted] (vcc)--(3, -2);

\draw[->] (vx)--(1, -2);

\end{tikzpicture}\hspace{20pt}
\begin{tikzpicture}[ scale=1, >=triangle 45, vert/.style={circle, draw=black, fill=black, inner sep = 0pt, minimum size = 1mm}]
\node(va) at (0, 0) [vert] {};
\node(vb) at (2, 1) [vert] {};
\node(vc) at (3, 0) [vert] {};
\node(vaa) at (0, -1) [vert] {};
\node(vbb) at (2, -1) [vert] {};
\node(vcc) at (3, -1) [vert] {};
\node(vx) at (1, -.5) [vert] {};

\draw[dotted] (va)--(vb);
\draw[dotted] (vb)--(vc);
\draw[dotted] (vc)--(va);

\draw[dotted] (vx)--(va);
\draw[dotted] (vx)--(vb);
\draw[->] (vx)--(vc);

\draw[->] (0, 1)--(va);
\draw[->] (va)--(vaa);
\draw[dotted] (vaa)--(0, -2);

\draw[->] (2, 2)--(vb);
\draw[->] (vb)--(vbb);
\draw[dotted] (vbb)--(2,-2);

\draw[<-] (3, 1)--(vc);
\draw[dotted] (vc)--(vcc);
\draw[<-] (vcc)--(3, -2);

\draw[dotted] (vx)--(1, -2);

\end{tikzpicture}\hspace{20pt}
\begin{tikzpicture}[ scale=1, >=triangle 45, vert/.style={circle, draw=black, fill=black, inner sep = 0pt, minimum size = 1mm}]
\node(va) at (0, 0) [vert] {};
\node(vb) at (2, 1) [vert] {};
\node(vc) at (3, 0) [vert] {};
\node(vaa) at (0, -1) [vert] {};
\node(vbb) at (2, -1) [vert] {};
\node(vcc) at (3, -1) [vert] {};
\node(vx) at (1, -.5) [vert] {};

\draw[->] (va)--(vb);
\draw[dotted] (vb)--(vc);
\draw[dotted] (vc)--(va);

\draw[dotted] (vx)--(va);
\draw[dotted] (vx)--(vb);
\draw[->] (vx)--(vc);

\draw[->] (0, 1)--(va);
\draw[dotted] (va)--(vaa);
\draw[<-] (vaa)--(0, -2);

\draw[<-] (2, 2)--(vb);
\draw[dotted] (vb)--(vbb);
\draw[<-] (vbb)--(2,-2);

\draw[<-] (3, 1)--(vc);
\draw[dotted] (vc)--(vcc);
\draw[<-] (vcc)--(3, -2);

\draw[dotted] (vx)--(1, -2);

\end{tikzpicture}
\end{center}
\caption{Orientations with at least one of $e_1$, $e_2$, or $e_3$ pointing down.}\label{3-our-gadget}
\end{figure}
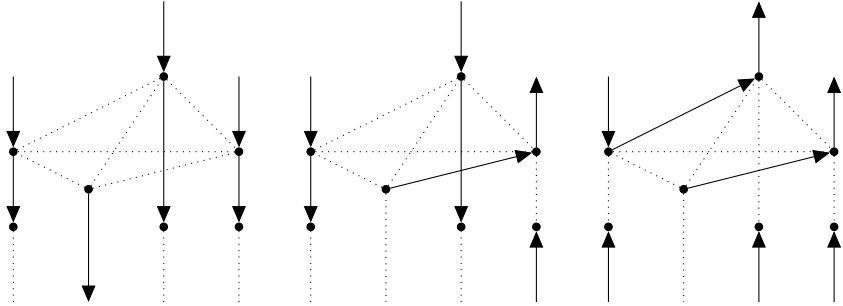
\end{proof}

Note that $e_1$, $e_2$, and $e_3$ must point either up or down because of their status as $t_j$ or $f_k$ edges in the tree subgraphs.  Thus we may ignore any orientations of the clause subgraph in which $e_1$, $e_2$, or $e_3$ is undirected, because these orientations are not possible when the subgraph is connected to the whole graph.

At this point in the construction, we have a tree subgraph for each variable, and a clause subgraph for each clause, with the subgraphs connected according to which variables appear in each clause.  In order to resolve any remaining unfinished edges in the graph, we complete the construction as P\'alv\"olgyi does: he takes the graph constructed so far and adds the \emph{mirrored reflection} of it.  That is, start by duplicating the construction.  Then, for each new vertex $v'$ corresponding to existing vertex $v$, put $\rho(v') = \delta(v)$, $\delta(v') = \rho(v)$, and $\theta(v') = \theta(v)$.  Finally, add an edge between $v$ and $v'$ if and only if $v$ was incident to an unfinished edge.  This completes the construction.  As in P\'alv\"olgyi's proof, the constructed instance of \textsc{Partial Orientation} has a solution if and only if the given instance of \textsc{3-SAT} has a solution, and our modified construction does not use any vertices of prescribed in-degree or out-degree greater than $1$.

\section*{Acknowledgments}

This research was supervised by Garth Isaak at the Lafayette College REU, which was supported by Lafayette College and the National Science Foundation (grant number DMS 0552825).  Thanks also to Brad Alpert for helping to edit the writing.

\bibliographystyle{plain}
\bibliography{edgedisjointlong}
\end{document}